\algrenewcommand\algorithmicfunction{}
\algrenewcommand\algorithmicprocedure{}
\renewcommand{\epsilon}{\varepsilon}
\newcommand{\Bal}{\mathit{Sturm}}
\newcommand{\Pal}{\mathit{Pal}}
\newcommand{\Lyn}{\mathit{Lyn}}
\begin{document}
\title{
Dorst–Smeulders Coding for\\ Arbitrary Binary Words
\thanks{Gabriele Fici is partly supported by MUR project PRIN 2022 APML – 20229BCXNW, funded by the European Union – Mission 4 ``Education and Research'' C2 - Investment 1.1.}}
%
%
\author{
Alessandro De Luca\inst{2}\orcidID{0000-0003-1704-773X}\\ \and
Gabriele Fici\inst{3}\orcidID{0000-0002-3536-327X} }
\authorrunning{A. De Luca and G. Fici}
%
\institute{
DIETI, Università di Napoli Federico II, Italy\\
\email{alessandro.deluca@unina.it}\\ \and
Dipartimento di Matematica e Informatica, Università di Palermo, Italy\\
\email{gabriele.fici@unipa.it}}
\maketitle              
\begin{abstract}
 A binary word is Sturmian if the occurrences of each letter are balanced, in the sense that in any two factors of the same length, the difference between the number of occurrences of the same letter is at most $1$. In digital geometry, Sturmian words correspond to discrete approximations of straight line segments in the Euclidean plane. The Dorst–Smeulders coding, introduced in 1984, is a 4-tuple of integers that uniquely represents a Sturmian word $w$, enabling its reconstruction using $|w|$ modular operations, making it highly efficient in practice. In this paper, we present a linear-time algorithm that, given a binary input word $w$, computes the Dorst–Smeulders coding of its longest Sturmian prefix. This forms the basis for computing the Dorst–Smeulders coding of an arbitrary binary word
$w$, which is a minimal decomposition (in terms of the number of factors) of $w$ into Sturmian words, each represented by its Dorst–Smeulders coding. This coding could be leveraged in compression schemes where the input is transformed into a binary word composed of long Sturmian segments. Although the algorithm is conceptually simple and can be implemented in just a few lines of code, it is grounded in a deep analysis of the structural properties of Sturmian words.
\keywords{Sturmian word \and Factorization \and Dorst–Smeulders coding.}
\end{abstract}

\section{Introduction}

Sturmian (finite) words are binary balanced words. They can be used to describe straight line segments in the discrete plane. In 1984, Dorst and Smeulders introduced a coding for uniquely representing a Sturmian word~\cite{DBLP:journals/pami/DorstS84}. Although other codings were proposed in the literature (see, e.g.,~\cite{DBLP:journals/pami/LindenbaumK91}) the coding of Dorst and Smeulders has a clear interpretation in terms of combinatorics on words. In fact, Sturmian words are factors of (lower primitive) Christoffel words, and Christoffel words have lots of interesting combinatorial properties. For every length $\ell>0$ and every height (number of $1$'s) $h$ coprime with $\ell$, there is exactly one Christoffel word $u_{h,\ell}$. Moreover, Christoffel words are precisely the Lyndon Sturmian words. Let $w$ be a Sturmian word of minimum period $p$. The Lyndon conjugate of the prefix of length $p$ of $w$ is therefore a Christoffel word $u_{h,p}$, and $w_1\cdots w_p=\sigma^s(u_{h,p})$ for some $s\geq 0$, where $\sigma$ is the usual (right) shift operator. The word $w$ is then uniquely determined by its length $n$, its period $p$, the height $h$, and the shift $s$. The $4$-tuple $(n,p,h,s)$ is the \emph{Dorst–Smeulders coding} of $w$. 
In this paper, we extend the Dorst–Smeulders coding to an arbitrary binary word $w$ by considering a particular factorization of $w$ in Sturmian words and encoding each term with its Dorst–Smeulders coding.

Let $\Sigma$ be an alphabet, and let $L\subseteq \Sigma^*$ be a language such that $\Sigma\subseteq L$. Then every word $w$ in $\Sigma^*$ can be factored in words of $L$, i.e., there exist $x_1,\ldots,x_k\in L$ such that $w=x_1\cdots x_k$. We call $k$ the \emph{length} of the factorization. We call a factorization  $w=x_1\cdots x_k$ \emph{minimal} if any other factorization in words of $L$ has length at least $k$.
For example, if $L=\Pal$ is the language of palindromes, the length of a minimal factorization of $w$ is also called the \emph{palindromic length} of $w$~\cite{DBLP:journals/aam/FridPZ13}. 

For some languages, a minimal factorization can be obtained by the \emph{greedy algorithm}, which consists in taking the longest prefix of $w$ that belongs to $L$ and recursing on what remains after removing this prefix from $w$. 
This is the case, for example, for the language $\Lyn$ of Lyndon words, where the minimal factorization is also called the \emph{Chen--Fox--Lyndon factorization}~\cite{chen1958free}, but it is not the case for the language of palindromes. For example, the greedy factorization of $w=0010$ is $w=00\cdot 1\cdot 0$ and it is not minimal, since $w=0\cdot 010$.

In particular, if the language $L$ is \emph{factorial} (i.e., it is closed under taking factors), then the greedy algorithm always produces a minimal factorization---this can be easily proved by contradiction. Actually, it is sufficient that the language is closed under taking suffixes~\cite{DBLP:conf/dcc/CohnK96}.
By symmetry, for languages closed under taking prefixes, the right-to-left greedy factorization produces a minimal factorization.

The complexity of the problem of determining a minimal factorization depends on the language $L$. For example, it is known that this complexity is linear in the length of the input word for $\Pal$~\cite{kosolobov2017palindromic} and $\Lyn$~\cite{DBLP:journals/jal/Duval83}. 

For those languages such that the greedy algorithm produces a minimal factorization, this complexity reduces to the complexity of computing the longest prefix of $w$ that belongs to $L$. 

In this paper, we are particularly interested in the language $\Bal$ of Sturmian words. A binary word is Sturmian if the occurrences of each letter are balanced, in the sense that in any two factors of the same length, the difference between the number of occurrences of the same letter is at most $1$. The language $\Bal$ is a factorial language; hence, a minimal Sturmian factorization can be obtained by the greedy algorithm. 
We then call \emph{Dorst–Smeulders coding} of an arbitrary binary word $w$ the list of  Dorst–Smeulders codings of the terms in the minimal Sturmian factorization of $w$ obtained from the greedy algorithm.  
We present a linear-time algorithm to compute the Dorst–Smeulders coding of an arbitrary binary word. The core of the algorithm is a procedure that, given an arbitrary binary input word $w$, computes the longest prefix of $w$ that is Sturmian (a different linear time algorithm for this, based on geometric considerations, is described in~\cite{DBLP:journals/dm/BerstelP96}). Our procedure scans $w$ from left to right, letter by letter, checking for an arithmetic property that ensures balance, and updates the $4$ parameters of the Dorst–Smeulders coding, thus yielding as output the Dorst–Smeulders coding of the longest Sturmian prefix.
Our algorithm is straightforward to implement since it consists of only elementary arithmetic operations. 

\section{Preliminaries}\label{sec:prelim}

Let $\Sigma$ be a finite alphabet. A \emph{language} $L$ over $\Sigma$ is a set of words over $\Sigma$, i.e., a subset of $\Sigma^*$, the free monoid generated by $\Sigma$. A language $L$ is \emph{factorial} (or factor-closed) if it contains all the factors of its words.

Let $w=w_1w_2\cdots w_n$, $w_i\in \Sigma$, be a word of length $n=|w|$. An integer $p>0$ is a \emph{period} of $w$ if $w_i=w_j$ whenever $i=j \mod p$. The prefix $\rho(w)$ of $w$ whose length is the minimum period of $w$ is called the \emph{fractional root} of $w$.  A \emph{border} of $w$ is a factor that occurs as a prefix and as a suffix in $w$. The word $w$ has a border of length $b$ if and only if $|w|-b$ is a period of $w$. An \emph{unbordered word} is a word that coincides with its fractional root, i.e., such that $|w|$ is its minimum period.

The \emph{reversal} $\widetilde{w}$ of $w$ is the word $\widetilde{w}=w_nw_{n-1}\cdots w_1$. A word is a \emph{palindrome} if it coincides with its reversal.

Let  $w=w_1\cdots w_{n-1}w_{n}$ be a word of length $n>0$. The \textit{shift} of $w$ is the word $\sigma(w)=w_{n}w_1\cdots w_{n-1}$. Two words $w$ and $w'$ are \emph{conjugates} if $w=uv$ and $w'=vu$ for some words $u$ and $v$. The conjugacy class of a word $w$ can be obtained by repeatedly applying the shift operator, and contains $|w|$ distinct elements if and only if $w$ is \emph{primitive}, i.e., $w\neq v^k$ for any nonempty word $v$ and $k>1$.

A nonempty word is \emph{Lyndon} if it is lexicographically smaller than all its nonempty suffixes. Lyndon words are unbordered. Every primitive word $w$ has a (unique) Lyndon conjugate, i.e., $w=\sigma^s(w')$ for a Lyndon word $w'$ and an integer $s\geq 0$.

From now on, we suppose $\Sigma=\{0,1\}$, and we use the lexicographic order on $\Sigma^*$ induced by $0<1$. Occasionally, we will also consider the letters of $\Sigma$ as numbers, i.e., make use of their arithmetic value. 

A  word $w\in \Sigma^*$ is Sturmian (or $1$-balanced) if for every two factors $u$ and $v$ of $w$ of the same length, one has $||u|_0-|v|_0|\leq 1$ (or, equivalently, $||u|_1-|v|_1|\leq 1$), where $|w|_x$ denotes the number of occurrences of the letter $x$ in the word $w$. For example, $01001$, $010101$ and $110101$ are Sturmian words, whereas $0011$ is not. Famous examples of Sturmian words are the \emph{Fibonacci words}, defined recursively by $f_1=1$, $f_2=0$ and $f_n=f_{n-1}f_{n-2}$ for each $n>2$.

For any Sturmian word $w$, at least one between $0w$ and $1w$ (resp.~between $w0$ and $w1$) is Sturmian.
A Sturmian word $w$ is \emph{left (resp.~right) special} if $0w$ and $1w$ (resp.~$w0$ and $w1$) are both Sturmian, and it is \emph{bispecial} if it is left and right special.
Of course, a word is left (resp.~right) special if and only if it is a prefix (resp.~a suffix) of a bispecial word.
A bispecial word is \emph{strictly bispecial} if $0w0$, $1w0$, $0w1$ and $1w1$ are in $\Bal$. For example, $00$ is strictly bispecial; $10$, instead, is bispecial but not strictly bispecial, since $1\cdot 10 \cdot 0$ is not Sturmian.

A \emph{central word} is a word having two coprime periods, $p$ and $q$, and length $p+q-2$. It is well-known that central words are Sturmian and palindromes. 

\begin{proposition}[\cite{de1994some}]\label{central}
Let $w$ be a binary word. The following are equivalent:
\begin{enumerate}
    \item $w$ is a central word;
     \item $w$ is a strictly bispecial Sturmian word;
    \item $w$ is a power of a single letter or there exist palindromes (actually, central words) $P,Q$ such that $w=P01Q=Q10P$.
\end{enumerate}
\end{proposition}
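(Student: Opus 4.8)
The plan is to establish the cycle of implications $(1)\Rightarrow(2)\Rightarrow(3)\Rightarrow(1)$, after disposing of the powers of a single letter: if $w=a^{n}$ (including $w=\varepsilon$), then $1$ and $n+1$ are coprime periods with $1+(n+1)-2=n$, the word $w$ is trivially left and right special, and $0a^{n}0$, $0a^{n}1$, $1a^{n}0$, $1a^{n}1$ are all directly checked to be Sturmian, so all three conditions hold. Throughout the rest I rely on two tools: the Fine--Wilf theorem (a word of length at least $p+q-\gcd(p,q)$ having periods $p$ and $q$ has period $\gcd(p,q)$, so a central word lies exactly one unit below this threshold for its pair of periods) and the classical forbidden-factor criterion for imbalance (a binary word is not Sturmian if and only if it contains both $0u0$ and $1u1$ for some word $u$, which may be taken to be a palindrome). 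I will also use the facts recalled in the text that every central word is Sturmian and a palindrome.

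For $(1)\Rightarrow(2)$, let $w$ be central with coprime periods $p<q$, $|w|=p+q-2$, and $p\ge 2$, so that $w$ is a Sturmian palindrome; it remains to show that for all $a,b\in\{0,1\}$ the word $awb$ contains no pair of factors $0u0$, $1u1$ with $u$ a palindrome. Since $w$ itself is Sturmian, at least one of the two sandwiches must reach an endpoint of $awb$, and then the palindromicity of $w$ and of $u$ forces $u$ to be a border of $w$ and pins down the letters flanking the prefix-occurrence and the suffix-occurrence of $u$ in $w$. If both sandwiches reach endpoints, these flanking letters conflict outright; if only one does, the other sandwich lies entirely inside $w$, so $u$ occurs internally with a flank opposite to the one forced at the boundary, and confronting the border period $|w|-|u|$ with $p$ and $q$ via Fine--Wilf (this is where $\gcd(p,q)=1$ is lethal) yields a contradiction. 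This is the step that genuinely uses centrality rather than mere balance---the Sturmian palindrome $0110$ is not strictly bispecial---and I expect this forbidden-factor bookkeeping to be the main obstacle of the whole argument.

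For $(2)\Rightarrow(3)$, suppose $0w0$, $0w1$, $1w0$, $1w1$ are all Sturmian. First, $w$ is a palindrome: otherwise a boundary analysis of the same kind produces a forbidden pair $0u0$, $1u1$ inside $0w0$ or $1w1$. If $w$ is not a power of a single letter, it contains the factor $01$ and hence, by palindromicity, also $10$; choosing the occurrence of $01$ whose surrounding blocks are borders of $w$---equivalently, reading the decomposition off at the minimal period of $w$, which takes a short separate argument---gives $w=P01Q=Q10P$ with $P$ and $Q$ palindromic (being both prefixes and suffixes of the palindrome $w$). That $P$ and $Q$ are moreover central, as the parenthetical in (3) asserts, follows a posteriori, since once the cycle is closed they satisfy (1).

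Finally, for $(3)\Rightarrow(1)$: if $w$ is a power of a single letter we are done, and otherwise $w=P01Q=Q10P$ with $P,Q$ palindromes. Then $P$, being both a prefix and a suffix of $w$, is a border, so $|Q|+2=|w|-|P|$ is a period of $w$; symmetrically $|P|+2$ is a period, and $(|P|+2)+(|Q|+2)=|w|+2$. It remains only to see these two periods are coprime. If $d:=\gcd(|P|+2,|Q|+2)>1$, then $|w|=(|P|+2)+(|Q|+2)-2\ge(|P|+2)+(|Q|+2)-d$, so the Fine--Wilf theorem gives that $w$ has period $d$; but $01$ occurs in $w$ at positions $|P|+1,|P|+2$ and $10$ occurs at positions $|Q|+1,|Q|+2$, while $d\mid|P|+2$ and $d\mid|Q|+2$ force $|P|+1\equiv|Q|+1\equiv d-1\pmod d$, so the period-$d$ structure would make the factor $w_{d-1}w_{d}$ equal to both $01$ and $10$---a contradiction. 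Hence $|P|+2$ and $|Q|+2$ are coprime periods of $w$ summing to $|w|+2$, i.e.\ $w$ is central, which closes the cycle.
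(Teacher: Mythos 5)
First, note that the paper does not prove Proposition~\ref{central}: it is quoted from de Luca and Mignosi~\cite{de1994some}, so there is no in-paper proof to compare against. Judged on its own, your cycle $(1)\Rightarrow(2)\Rightarrow(3)\Rightarrow(1)$ is a reasonable architecture, and the implication $(3)\Rightarrow(1)$ is genuinely complete and correct: the border/period bookkeeping, the identity $(|P|+2)+(|Q|+2)=|w|+2$, and the Fine--Wilf argument forcing $w_d$ to be simultaneously $0$ and $1$ when $d=\gcd(|P|+2,|Q|+2)>1$ all check out (including the degenerate cases where $P$ or $Q$ is empty).

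The other two implications, however, contain gaps that are not just expository. In $(1)\Rightarrow(2)$, the case where exactly one of the sandwiches $0u0$, $1u1$ touches an endpoint is the whole difficulty, and the sentence ``confronting the border period $|w|-|u|$ with $p$ and $q$ via Fine--Wilf yields a contradiction'' does not survive scrutiny: a palindromic border $u$ of a central word yields a period $|w|-|u|$ that in general is neither $p$ nor $q$ and is too large for Fine--Wilf to interact with them (e.g., $w=01010$ has periods $2$, $4$, $5$ on length $5$, and Fine--Wilf applied to $4$ and $5$ says nothing). The actual contradiction must come from controlling \emph{where} the border $u$ can recur internally and with which flanking letters, which is precisely the structural information about central/standard words that de Luca and Mignosi develop and that your argument never establishes. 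The same missing ingredient undermines $(2)\Rightarrow(3)$: the claim that a strictly bispecial Sturmian word is a palindrome is asserted via ``a boundary analysis of the same kind'' (note that strictness is essential here, since $10$ is bispecial but not a palindrome, so the analysis cannot be routine), and the extraction of the decomposition $w=P01Q=Q10P$ from the minimal period is deferred to ``a short separate argument'' that is not supplied --- one must actually show that the prefix of length $p-2$ and the border of length $|w|-p$ are palindromes and that the two letters between them are $01$ on one side and $10$ on the other. As it stands, the two hard implications are sketches whose pivotal steps are named but not proved.
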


A (lower) \emph{Christoffel word} is either a single letter or a word of the form $0c1$, where $c$ is a central word. Christoffel words are precisely the Lyndon Sturmian words.

Every Christoffel word $u=0c1$ has a unique palindromic factorization $u=\alpha\beta$. Moreover, the lengths of $\alpha$ and $\beta$ are the two coprime periods of $c$, as well as the multiplicative inverses of $|u|_0$ and $|u|_1$ modulo $|u|$, respectively. This factorization is straightforward if $c$ is a power of a single letter; otherwise, by Proposition~\ref{central}, one has $\alpha=0P0$ and $\beta=1Q1$.

For example, $010010010$ is a central word, with coprime periods $p=3$ and $q=8$, and $00100100\cdot 101$ is the palindromic factorization of the corresponding Christoffel word.

For more details on Christoffel and Sturmian words, the reader is pointed to \cite[Chap.~2]{LothaireAlg} and \cite{Book08}.

\section{The coding of Dorst and Smeulders}

We now describe how to code any Sturmian word using $4$ integers of size bounded by the length of the word. This coding is due to Dorst and Smeulders~\cite{DBLP:journals/pami/DorstS84}.

Every Sturmian word $w$  occurs as a factor of length $n=|w|$ starting at some position $s'$ in the infinite periodic Christoffel word $u_{h,p}^\omega=u_{h,p}u_{h,p}\cdots$ of slope $h/p$, where $p=|\rho(w)|$ is the minimum period of $w$ and $h=|\rho(w)|_1$ is the height of the fractional root of $w$. The number $s=(1-s')\bmod p$, called shift, is in fact the distance of the root of $w$ from its  Christoffel conjugate. In other words, the root $w_1w_2\cdots w_p$ of $w$ has  Christoffel conjugate $w_{s+1}w_{s+2}\cdots w_{s}$. In particular, $s=0$ if and only if the root of $w$ is a  Christoffel word.

For example, $w=101001$ occurs in the infinite periodic Christoffel word  $(00101)^\omega$ of slope $2/5$, starting at position $3$. Notice that $w$ has the same minimum period $p=5$ of $00101$, since $00101$ is a conjugate of the fractional root $10100$ of $w$, and the height $h=2$ of $00101$ is the height of the fractional root of $w$.

Therefore, every Sturmian word $w$ is completely determined by its length $n$, its period $p$, the height $h$ of its fractional root, and the shift $s$. The $4$-tuple $(n,p,h,s)$ is called the \textit{Dorst–Smeulders coding} of $w$.

Since for every $i\geq 1$, the $i$th letter of the infinite periodic Christoffel word $u_{h,p}^\omega$  is 
\[\left\lfloor i\frac{h}{p} \right\rfloor - \left\lfloor (i-1) \frac{h}{p}\right\rfloor\]
(or equivalently, the prefix of $u_{h,p}^\omega$ of length $i$ has height $\lfloor ih/p\rfloor$),
we have that the $i$th letter of $w$, for every $1\leq i\leq n$, is
\[w_i=\left\lfloor (i-s)\frac{h}{p} \right\rfloor - \left\lfloor (i-s-1) \frac{h}{p}\right\rfloor .\]

This leads to a reconstruction of $w$ from its Dorst–Smeulders coding that is very fast in practice, since it performs only arithmetic operations.

\section{An optimal online algorithm}

We now give a linear-time algorithm that computes the longest Sturmian prefix $w$ of an input word and returns the Dorst–Smeulders coding of $w$.

 It scans the input word from left to right, character by character, and maintains four integer variables, which represent the Dorst–Smeulders coding of the current balanced prefix. In order to check whether the next prefix is balanced, we make use of the following result. 

\begin{lemma}
\label{lem:rsp}
Let $w\neq\varepsilon$ be a Sturmian word with
Dorst–Smeulders coding $(n,p,h,s)$, and $a\in\{0,1\}$.
If $wa$ has period $p$, then it is Sturmian.
Otherwise, $wa$ is Sturmian (and $w$ is right special) if and only if 
\begin{equation}\label{eq:rsp}
    (-h(n+1-s))\bmod p\in\{0,1\}.
\end{equation}
\end{lemma}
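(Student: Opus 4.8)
The plan is to follow the dichotomy in the statement. The first assertion is easy: if $wa$ has period $p$, then $wa$ is a prefix of $\rho(w)^{\omega}$, and since the fractional root $\rho(w)$ is a conjugate of the Christoffel word $u_{h,p}$, the word $\rho(w)^{\omega}$ is a shift of the periodic mechanical word $u_{h,p}^{\omega}$ of slope $h/p$ (Section~3); as every factor of a periodic mechanical word is balanced, $wa$ is Sturmian.

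For the second assertion, let $b$ be the \emph{periodic continuation} of $w$, the unique letter such that $wb$ has period $p$; by the floor formula of Section~3, $b=\lfloor(n+1-s)h/p\rfloor-\lfloor(n-s)h/p\rfloor$. If $wa$ does not have period $p$ then $a\neq b$, hence $a=1-b=\overline b$. Since $wb$ is Sturmian (first part) and $\{b,\overline b\}=\{0,1\}$, the word $wa=w\overline b$ is Sturmian if and only if both $w0$ and $w1$ are, that is, if and only if $w$ is right special. So the task reduces to showing that $w$ is right special if and only if $(-h(n+1-s))\bmod p\in\{0,1\}$; I will use the equivalent form $(h(n+1-s))\bmod p\in\{0,p-1\}$.

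To study right-specialness I would invoke the standard characterization of imbalance: a binary word is \emph{not} Sturmian iff it contains two factors $0q0$ and $1q1$ with $q$ a palindrome (and one may take $q$ central). Applied to $w\overline b$, and using that both $w$ and $wb$ are Sturmian, this says: $w\overline b$ is not Sturmian iff some central word $q$ occurs as a suffix of $w$ preceded by $\overline b$, while $bqb$ occurs inside $w$. A first reduction forces $|q|<p$: a factor of $u_{h,p}^{\omega}$ of length $\geq p$ occurs in a single residue class modulo $p$ (two occurrences at distance $<p$ would, by Fine--Wilf, produce a period properly dividing $p$, contradicting primitivity of $u_{h,p}$), hence is always preceded by the same letter, so such a $q$ cannot be preceded both by $\overline b$ and by $b$. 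Therefore $w$ is right special iff \emph{no} central $q$ with $|q|<p$ occurs as a suffix of $w$ preceded by $\overline b$ with $bqb$ a factor of $w$.

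The remaining, and in my view hardest, step is to convert this into the congruence. Here I would fix the embedding $w_i=(u_{h,p}^{\omega})_{i+s'-1}$ with $s'\equiv 1-s\pmod p$, so that the position immediately after $w$ lies at residue $n+1-s$ modulo $p$; the central words of length $<p$ available in $u_{h,p}^{\omega}$ are the proper palindromic prefixes of the central word $c$ of $u_{h,p}=0c1$, whose structure is recalled in Section~2 (two coprime periods equal to the inverses of $h$ and $p-h$ modulo $p$, decomposition $P01Q=Q10P$). For each such $q$ one reads off the residue class of its suffix-occurrence in $w$, the letter $\overline b$ preceding it, and whether $bqb$ falls inside the window spanned by $w$; assembling this data should show that a bad $q$ exists precisely when $(h(n+1-s))\bmod p\notin\{0,p-1\}$. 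The delicate points are the case analysis on the value of $b$ and on the continued-fraction/palindromic layering of $c$, together with small $n$ (where not all relevant palindromic prefixes actually occur as suffixes of $w$) and the trivial case $p=1$. A self-contained alternative avoids central words entirely: the factors of $w\overline b$ not already in $wb$ are exactly its suffixes, so $w\overline b$ is Sturmian iff, for every $d$, the number of $1$'s in the length-$d$ suffix of $w\overline b$ differs by at most $1$ from that of every length-$d$ factor of $w$; expanding the prefix sums with the floor formula and splitting at $d=n-p+1$ (below which $w$ already shows both admissible $1$-counts) reduces this family of inequalities, after a modular computation, to $(h(n+1-s))\bmod p\in\{0,p-1\}$.
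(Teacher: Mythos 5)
Your first part and the reduction of the second part are correct and match the paper: if $wa$ has period $p$ then $\rho(wa)=\rho(w)$ is a conjugate of a Christoffel word, hence Sturmian; and if not, then $wb$ does have period $p$ for the other letter $b$, so $wa$ is Sturmian iff $w$ is right special. Up to this point you are on solid ground.

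The genuine gap is in the step you yourself flag as hardest: converting ``$w$ is right special'' into the congruence $(-h(n+1-s))\bmod p\in\{0,1\}$. Both of your proposed routes end with an unproven claim (``assembling this data should show\dots'', ``after a modular computation\dots''), and neither explains \emph{why} the two admissible residues are exactly $0$ and $1$ rather than some other pair depending on $h$ and $p$. That is precisely where the number theory enters, and it is not automatic. The paper's route is: a Sturmian word is right special iff its reversal is left special, iff its fractional root is a single letter or of the form $cxy$ with $c$ central and $\{x,y\}=\{0,1\}$ (de Luca--De Luca); hence $w$ must end in $10c$ or $01c$ where $u_{h,p}=0c1$. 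The suffix $10c$ pins the end of $w$ at a position $\equiv s\pmod p$ in $\rho(w)^\omega$, giving $n+1-s\equiv 0$, hence residue $0$. The suffix $01c$ means $w0=\lambda u^m\alpha$ where $u=\alpha\beta$ is the unique palindromic factorization of the Christoffel word, and the residue $1$ comes out only because of the duality $|\alpha|\cdot|u|_0\equiv|\beta|\cdot|u|_1\equiv 1\pmod p$ (Berth\'e--De Luca--Reutenauer), which identifies $|\alpha|\bmod p$ with $-\bar h$. Without this ingredient (or an equivalent one), your case analysis over the palindromic prefixes of $c$ cannot close: you would need to show that among all central $q$ with $|q|<p$, the only ``safe'' ending positions are the two singled out above, and that is exactly the content you have deferred. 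So the proposal is a correct reduction plus a plausible but incomplete plan, not a proof.
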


\begin{proof}
Since $w$ is Sturmian if and only if its fractional root $\rho(w)$ is a conjugate of a Christoffel word (cf.~\cite{DBLP:journals/tcs/LucaL06a}), clearly $wa$ is balanced if $|\rho(wa)|=p$, that is, if $\rho(wa)=\rho(w)$.

On the other hand, if $wa$ does not have period $p$, then necessarily $wb$ does, where $\{a,b\}=\{0,1\}$. Thus, in this case $wa$ is balanced if and only if $w$ is right special, which in turn is equivalent to $\widetilde w$ being \emph{left} special. Now, a Sturmian word is left special if and only if its fractional root is either a single letter or $cxy$ for a central word $c$ and $\{x,y\}=\{0,1\}$~\cite{de2006pseudopalindrome}.
The single letter case means $p=1$, trivially satisfying~\eqref{eq:rsp}. Otherwise, either $10c$ or $01c$ is a suffix of $w$.

The first option holds if and only if $w=\lambda u^k0c$, where $k\geq 0$ and $\lambda$ is the suffix of $u:=u_{h,p}=0c1$ of length $s$ (recall that the first occurrence of the lower Christoffel word $u$ in $\rho(w)^\omega$ begins at position $s+1$). Equivalently, considering the lengths of the words involved, we have $n+1-s\equiv 0\pmod p$; as $p>1$ and $\gcd(h,p)=1$, this is equivalent to $-h(n+1-s)\equiv 0\pmod p$.

Finally, let us examine the second possibility, i.e., $w$ ending in $01c$. Recall that $u=0c1$, being a primitive Christoffel word of length $p>1$, has a unique factorization $u=\alpha\beta$ in two palindromes $\alpha,\beta$, whose lengths verify~(see~\cite{berthe2008involution})
\begin{equation}
\label{eq:dual}
|\alpha|\cdot|u|_0\equiv |\beta|\cdot|u|_1\equiv 1\pmod{p}.
\end{equation}
Now, $w$ ends in $01c$ if and only if $w0=\lambda u^m\alpha$ for some $m\geq 0$. Since we have $|u|_1=|\rho(w)|_1=h$ and $|\beta|\equiv -|\alpha|\pmod p$, the above equation is equivalent to $-h|\alpha|\equiv 1\pmod p$, and hence to $-h(n+1-s)\equiv 1\pmod p$, as $n+1-s=|u^m\alpha|=mp+|\alpha|\equiv|\alpha|\pmod p$.
\qed
\end{proof}

\begin{remark}
    The same ideas from the previous lemma lead to a characterization of Sturmian words in terms of their \emph{period array} (i.e., the array whose $i$-th entry is the minimum period of the prefix of length $i$, for $1\leq i\leq n$; note the connection with the \emph{border} array). Namely, a binary word $w$ with period array \[(\underbrace{p_1,\ldots,p_1}_{k_1},\underbrace{p_2,\ldots,p_2}_{k_2},\ldots,\underbrace{p_m,\ldots,p_m}_{k_m})\]
    (with $1=p_1<p_2<\cdots<p_m=|\rho(w)|$) is balanced if and only if $p_j$ divides either $k_j$ or $k_j+p_{j-1}$, whenever $1<j<m$.
\end{remark}

\begin{lemma}
\label{lem:update}
    Let $w\in\{0,1\}^*$ be a right special Sturmian word and  $(n,p,h,s)$ be its Dorst–Smeulders coding, with $p>1$. If $a\in\{0,1\}$ is such that $|\rho(wa)|>p$, then the coding $(n+1,p',h',s')$ of $wa$ satisfies the following:
    \begin{itemize}
        \item $p'=n+1-((n+1-(-1)^a\bar h)\bmod p)$,
        \item $h'=\left\lfloor\frac{n+1-(-1)^a\bar h}{p}\right\rfloor h+(-1)^a\left\lfloor\frac {h\bar h}{p}\right\rfloor$,
        \item $s'=as + (1-a)(n+1-p)$,
    \end{itemize}
    where $\bar h$ is the multiplicative inverse of $h$ modulo $p$.
\end{lemma}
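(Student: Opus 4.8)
The plan is to follow the case distinction from the proof of Lemma~\ref{lem:rsp} and, in each case, to pin down $\rho(wa)$ explicitly as a conjugate of a suitable Christoffel word. Write $u:=u_{h,p}=0c1$ and let $\bar h$ be the inverse of $h$ modulo $p$, so that $|\alpha|=p-\bar h$ and $|\beta|=\bar h$ by~\eqref{eq:dual}. Since $p>1$ and $w$ is right special, $w$ ends either in $10c$ or in $01c$; as $w_{n+1-p}$ is then $1$ or $0$ respectively, only $w0$ (resp.\ $w1$) breaks the period $p$, so $a=0$ in the first case and $a=1$ in the second. I would also set up the standard factorization $u=u_Lu_R$ of $u$ into two (lower) Christoffel words: from the Stern--Brocot description one gets $|u_L|=\bar h$, so $u_L$ is the length-$\bar h$ prefix of $u$ and $|u_L|_1=\lfloor h\bar h/p\rfloor$, while $u_R$ is the length-$(p-\bar h)$ suffix of $u$ and $|u_R|_1=h-\lfloor h\bar h/p\rfloor$; and, by descending one level in the Christoffel tree and iterating, $u_Lu^j$ and $u^ju_R$ are again Christoffel words for every $j\ge1$.

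Consider first $a=0$. By the proof of Lemma~\ref{lem:rsp} we have $w=\lambda u^k0c$ with $k\ge0$ and $\lambda$ the length-$s$ suffix of $u$, so $wa=\lambda u^k0c0$, and $n+1-s\equiv0\pmod p$. Put $\beta=(n+1-\bar h)\bmod p$ and $j=\lfloor(n+1-\bar h)/p\rfloor$ (necessarily $j\ge1$). The claim I would prove is that $\rho(wa)=\sigma^{n+1-p}(u')$ for the Christoffel word $u'=u_Lu^j$; granting it, $|u'|=\bar h+jp$ and $|u'|_1=\lfloor h\bar h/p\rfloor+jh$ immediately give $p'=n+1-\beta$, $h'=|u'|_1$ and $s'=n+1-p$, which are the claimed values for $(-1)^a=1$. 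I would argue in two steps. First, $wa$ occurs in $(u')^\omega$; this reduces to the word identity $\pi_{p-\beta}(0c0)\cdot\lambda u^k=u_Lu^j$, where $\pi_m$ denotes the length-$m$ prefix and $0c0$ is the length-$p$ suffix of $wa$, and it is verified by unwinding the standard and palindromic factorizations of $u$ together with the congruence $\beta\equiv s-\bar h\pmod p$. Second, $|u'|$ is the \emph{minimum} period of $wa$: no period $\le p$ occurs (the prefix $w$ has minimum period $p$, and $a$ breaks it), no period that is a proper multiple of $p$ occurs (since $w_{n+1-mp}=w_{n+1-p}\neq a$), and no period strictly between $p$ and $|u'|$ occurs—this last point follows from a Fine--Wilf argument on $w$ combined with a direct comparison of the length-$p$ suffix $0c0$ of $wa$ with the prefixes of $\rho(w)^\omega$ (using that $0c0$ is not a conjugate of $u$ and that $c$ is a palindrome). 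Since $|u'|$ is the minimum period, $\rho(wa)=\lambda u^k\,\pi_{p-\beta}(0c0)$, which the identity presents as $\sigma^{n+1-p}(u')$.

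The case $a=1$ is symmetric. Now $w$ ends in $01c$, so $w0=\lambda u^m\alpha$ with $m\ge1$ (again by the proof of Lemma~\ref{lem:rsp}) and $wa=w1$; setting $\ell=\lfloor(n+1+\bar h)/p\rfloor=m+1$, the same two-step argument yields $\rho(wa)=\sigma^{s}(u')$ with $u'=u^{\ell-1}u_R$, hence $p'=|u'|=\ell p-\bar h=n+1-\big((n+1+\bar h)\bmod p\big)$ and $h'=|u'|_1=\ell h-\lfloor h\bar h/p\rfloor$, which are the claimed values for $(-1)^a=-1$; and the shift is unchanged, $s'=s$, since $wa$ still begins with $\lambda$ and the Christoffel conjugate of its root still begins at position $s+1$.

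The hard part is the word identity $\pi_{p-\beta}(0c0)\cdot\lambda u^k=u_Lu^j$ and its analogue for $a=1$: this is exactly where the fine combinatorial structure of Christoffel words—the interplay of the standard factorization $u=u_Lu_R$, the palindromic factorization $u=\alpha\beta$, and the relation~\eqref{eq:dual}—has to be brought to bear; once it is in hand, the rest is a routine Fine--Wilf argument and bookkeeping. A few degenerate configurations (for instance $p=2$, or $c$ a power of a single letter, where $\alpha$ or $\beta$ is a single letter) I would treat separately; these are immediate.
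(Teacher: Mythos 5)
Your route is genuinely different from the paper's: you try to exhibit the new fractional root explicitly as a conjugate of a Christoffel word obtained by descending the Christoffel tree, namely $u'=u_Lu^j$ when $a=0$ and $u'=u^{\ell-1}u_R$ when $a=1$, where $u=u_Lu_R$ is the standard factorization with $|u_L|=\bar h$. This identification is correct (I checked that $|u'|$, $|u'|_1$ and the claimed shifts reproduce the formulas in the statement in both cases, including the subcases $s<\bar h$ and $s\ge\bar h$), and it is an attractive, more "constructive" way to see where $(p',h',s')$ come from. The paper instead never names $u'$: it computes the longest border of $wa$ directly, using that $0c0$ (resp.\ $1c1$) occurs in $wa$ only as a suffix, which yields the minimum period $p'$ in one stroke, and it obtains $s'$ from the fact that $0c0$ (resp.\ $1c1$) is the lexicographically least (resp.\ greatest) length-$p$ factor, so the Lyndon conjugate of $\rho(wa)$ must start (resp.\ end) there.

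The problem is that the two claims your route rests on are asserted rather than proved, and one of them does not close as sketched. First, the word identity $\pi_{p-\beta}(0c0)\cdot\lambda u^k=u_Lu^j$ (and its $a=1$ analogue) is precisely the combinatorial content of the lemma --- it amounts to showing that the length-$s$ suffix of $u$ reappears as an internal factor of $u_L$ at the right offset --- and you explicitly defer it; without it the values of $p'$, $h'$ and $s'$ are not established. Second, and more seriously, your minimality argument is broken for periods $q$ with $p<q<|u'|$ that are not multiples of $p$: such a $q$ is a period of the prefix $w$, but Fine--Wilf applied to the pair $(p,q)$ requires $|w|\ge p+q-\gcd(p,q)$, which can fail (e.g.\ when $q$ is close to $n$), so no contradiction follows. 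The correct way to exclude these periods is exactly the paper's observation: a period $q$ of $wa$ gives a border of length $n+1-q$; if that length is at least $p$ the border contains a non-terminal occurrence of $0c0$ (resp.\ $1c1$), contradicting uniqueness, and if it is less than $p$ one must identify the longest proper suffix of $0c0$ that is a prefix of $wa$ --- which is the border computation (with its two subcases $s<\bar h$ and $s\ge\bar h$, and the fact that $\beta\alpha=1c0$ is unbordered) that your sketch was trying to avoid. So, carried out honestly, your minimality step collapses back into the paper's argument; as written, the proposal has a genuine gap at its core. Two smaller points: you overload the symbol $\beta$ (palindromic factor of $u$ versus $(n+1-\bar h)\bmod p$), and the claim $j\ge 1$ needs the hypothesis $|\rho(wa)|>p$ to exclude $j=0$ (which would give $p'=\bar h<p$).
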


\begin{proof}
    By Lemma~\ref{lem:rsp}, the number $-h(n+1-s)$ is, modulo $p$,  either $0$ or $1$.
    \begin{enumerate}
    \item As seen in the proof of Lemma~\ref{lem:rsp},
    the first case occurs when $w=\lambda(0c1)^k0c$ for some lower Christoffel word $u=0c1$ of length $p$, and $\lambda$ its proper suffix of length $s$. As $|\rho(wa)|\neq p$, we have $a=0$ in this case.
    
    Now, $0c0$ occurs only as a suffix in $wa=w0$, and all other factors of length $p$ are conjugates of the Lyndon word $0c1$, thus lexicographically greater. Therefore,
    $0c0$ is the lexicographically least factor of length $p$, so that the Lyndon conjugate of the root $\rho(wa)$ (whose length is larger than $p$, by hypothesis) must start with $0c0$. In other words, we have \[s'=|\lambda(0c1)^k|=|w0|-|0c0|=n+1-p,\] the desired value for $a=0$. Now let $0c1=\alpha\beta$ be the palindromic factorization. The unique occurrence of $0c0$ also implies that the longest border of $wa$ is the longest proper suffix of $0c0$ (or equivalently, of $1c0=\beta\alpha$) that is also a prefix of $\lambda 0c$.

    If $s<|\beta|$, then the word $\lambda$, being a suffix of $0c1=\alpha\beta$ in general, must be a suffix of $\beta$. Hence, by the above observation, the longest border of $wa=\lambda (\alpha\beta)^k 0c0$ is also the longest border of $\lambda\alpha\beta\alpha$, which is $\lambda\alpha$ (as $\beta\alpha$ is unbordered). Thus, $p'=|wa|-|\lambda\alpha|=n+1-(s+|\alpha|)$; this is the desired value, since in this case we have $a=0$, $n+1\equiv s\pmod p$, and $\bar h=|\beta|\equiv -|\alpha|\pmod p$ by equation~\eqref{eq:dual}. We also have
    \[\begin{split}h' &=|\beta(\alpha\beta)^{k-1}0c0|_1=|\beta (0c1)^k|_1-1=|\beta|_1-1+kh\\ &=\left\lfloor\frac{h\bar h}{p}\right\rfloor+\left(\frac{n+1-s}{p}-1\right)h,\end{split}\]
    where for the last equality we used the fact that $\beta$ is the suffix of length $\bar h$ of the Christoffel word $0c1$, and so its height minus 1 equals the height of the \emph{prefix} of $0c1$ of the same length (recall that $c$ is a palindrome). The value of $h'$ again satisfies our thesis since $s<\bar h<p$.

    If $s\geq|\beta|=\bar h$, instead, let $\lambda=\gamma\beta$, so that $\gamma$ is a proper suffix of $\alpha$. The longest border of $wa$ then coincides with the longest border of $\lambda\alpha=\gamma\beta\alpha$, which is $\gamma$. Thus, we again obtain $p'=n+1-((n+1-\bar h) \bmod p)$. Moreover, we have
    \[h'=|\beta(0c1)^{k+1}|_1-1=|\beta|_1-1+(k+1)h=\left\lfloor\frac{h\bar h}{p}\right\rfloor+\frac{n+1-s}{p}h,\] again as desired, since $0<\bar h<s<p$ here.
    
    \item By Lemma~\ref{lem:rsp}, the second case $-h(n+1-s)\equiv 1\pmod p$ occurs when $w0=\lambda(\alpha\beta)^k\alpha$ but $a=1$, so that $1c1$ occurs only as a suffix in $wa=w1$. Therefore, the longest border of $wa$ is the longest proper suffix of $1c1$ (and hence also of $0c1=\alpha\beta$) that is also a prefix of $\lambda\alpha\beta$; in other words, it coincides with the longest border of $\lambda\alpha\beta$, which is $\lambda$. Thus, $p'=n+1-s$ in this case; as $n+1-s+\bar h=|(\alpha\beta)^{k+1}|\equiv 0\pmod p$, the formula in the statement is again verified. Now, the lower Christoffel conjugate of $\rho(wa)$ must end with the factor of length $p$ that is lexicographically greatest, that is, $1c1$. This implies $s'=s$.

    Finally, we have \[h'=|(\alpha\beta)^k\alpha|_1+1=|(\alpha\beta)^{k+1}|_1-(|\beta|_1-1)=\left\lfloor\frac{n+1+\bar h}{p}\right\rfloor h-\left\lfloor\frac{h\bar h}{p}\right\rfloor.\quad\qed\]
    \end{enumerate}
\end{proof}

\begin{algorithm}
\caption{Dorst--Smeulders coding of the longest Sturmian prefix\label{DS}}
\begin{algorithmic}[1]
\Require Binary word $w = w_0 w_1 \dots w_{N-1}$, where $w_i \in \{0,1\}$
\Ensure Tuple $(n, p, h, s)$ that is the Dorst--Smeulders coding of the longest Sturmian prefix of $w$
\If{all letters in $w$ are equal}
    \State \Return $(N, 1, w_0, 0)$
\EndIf
\State $p \gets$ index of first occurrence of $1 - w_0$, plus $1$
\State $h \gets (-1)^{w_0} \bmod p$
\State $s \gets (-w_0) \bmod p$
\For{$n = p$ \textbf{to} $N - 1$}
    \If{$w_n \ne w_{n - p}$}
        \State $h^{-1} \gets$ modular inverse of $h$ modulo $p$
        \If{$(n - s + 1) \bmod p = 0$}
            \State $h \gets \left\lfloor \frac{n+1 - h^{-1}}{p} \right\rfloor \cdot h + \left\lfloor \frac{h \cdot h^{-1}}{p} \right\rfloor$
            \State $s \gets n + 1 - p$
            \State $p \gets n + 1 - ((n + 1 - h^{-1}) \bmod p)$
        \ElsIf{$(n - s + 1 + h^{-1}) \bmod p = 0$}
            \State $h \gets \left\lfloor \frac{n+1 + h^{-1}}{p} \right\rfloor \cdot h - \left\lfloor \frac{h \cdot h^{-1}}{p} \right\rfloor$
            \State $p \gets n + 1 - ((n + 1 + h^{-1}) \bmod p)$
        \Else
            \State \Return $(n, p, h, s)$
        \EndIf
    \EndIf
\EndFor
\State \Return $(N, p, h, s)$
\end{algorithmic}
\end{algorithm}

The total running time is clearly linear, and the working space is constant, assuming the Word-RAM model. The pseudocode is shown in Algorithm~\ref{DS}.


\begin{example}
Let 
$w=0101001101010000010010010101001001000101$, of length $40$. The Dorst–Smeulders coding of $w$ is
\[(7, 5, 2, 4), (7, 7, 3, 5), (11, 10, 3, 0), (11, 11, 4, 3), (4, 2, 1, 0)\]
corresponding to the factorization
\[0101001\cdot 1010100\cdot 00010010010\cdot 10100100100\cdot 0101.\]
\end{example}

\section{Conclusions and future work}

We described a simple algorithm that, on a given binary input string, returns the Dorst–Smeulders coding of the Sturmian factors in the greedy decomposition. Our algorithm is conceptually simple and needs only constant working space in the Word-RAM model, since it only needs to maintain a constant number of integers whose size is bounded by $n$.
Our algorithm could be used as part of a compression scheme after applying some preprocessing that maps repetitive strings to strings with long Sturmian factors.


\end{document}